\newtcbox{\inlinecode}{on line, boxrule=0pt, boxsep=0pt, arc=0pt, 
  colback=rowgray, colframe=rowgray, top=1pt, bottom=1pt, 
  left=2pt, right=2pt, fontupper=\ttfamily\small}
\newcommand{\Desc}[2]{\makebox[2em][l]{#1:}#2}
\definecolor{steelblue}{RGB}{70,130,180}
\definecolor{rowgray}{RGB}{245,245,245}
\definecolor{plblue}{RGB}{70,130,180}
\definecolor{plbluelight}{RGB}{100,149,237}
\definecolor{tagblue}{RGB}{0,0,210}
\definecolor{darkgray}{RGB}{64,64,64}
\definecolor{algorithmblue}{RGB}{0,0,179}
\newcolumntype{R}[2]{%
    >{\adjustbox{angle=#1,lap=\width-(#2)}\bgroup}%
    l%
    <{\egroup}%
}
\newcommand{\bigO}{\ensuremath{\mathcal{O}}}
\begin{document}

%\title{Supporting Optional Lines in Block Ordering Problems}
% David: My attempt at a fun title
\title{Choose Your Own Solution: Supporting Optional Blocks in Block Ordering Problems}

\author{Skyler Oakeson}
\affiliation{%
	\institution{Utah State University}
	\city{Logan}
	\state{UT}
	\country{USA}}
\email{skyler.oakeson@usu.edu}

\author{David H. Smith IV}
\affiliation{%
	\institution{Virginia Tech}
	\city{Blacksburg}
	\state{VA}
	\country{USA}}
\email{dhsmith4@vt.edu}

\author{Jaxton Winder}
\affiliation{%
	\institution{Utah State University}
	\city{Logan}
	\state{UT}
	\country{USA}}
\email{jaxton.winder@usu.edu}

\author{Seth Poulsen}
\affiliation{%
	\institution{Utah State University}
	\city{Logan}
	\state{UT}
	\country{USA}}
\email{seth.poulsen@usu.edu}

%\author{Anon}
%\email{anon@anon.com}
%\orcid{1234-5678-9012}
%\affiliation{%
%	\institution{University of Anon}
%	\city{Anon}
%	\state{Anon}
%	\country{Anon}
%}
%
%\author{Anon}
%\email{anon@anon.com}
%\orcid{1234-5678-9012}
%\affiliation{%
%	\institution{University of Anon}
%	\city{Anon}
%	\state{Anon}
%	\country{Anon}
%}
%
%\author{Anon}
%\email{anon@anon.com}
%\orcid{1234-5678-9012}
%\affiliation{%
%	\institution{University of Anon}
%	\city{Anon}
%	\state{Anon}
%	\country{Anon}
%}
%
%\author{Anon}
%\email{anon@anon.com}
%\orcid{1234-5678-9012}
%\affiliation{%
%	\institution{University of Anon}
%	\city{Anon}
%	\state{Anon}
%	\country{Anon}
%}

\begin{abstract} This paper extends the functionality of block ordering
	problems (such as Parsons problems and Proof Blocks) to include optional
	blocks. We detail the algorithms used to implement the optional block
	feature and present usage experiences from instructors who have integrated
	it into their curriculum. The optional blocks feature enables instructors
	to create more complex Parsons problems with multiple correct solutions
	utilizing ommitted or optional blocks. This affords students a method to
	engage with questions that have several valid solutions composed of
	different answer components. Instructors can specify blocks with multiple
	mutually exclusive dependencies, which we represent using a multigraph
	structure. This multigraph is then collapsed into multiple directed acyclic
	graphs (DAGs), allowing us to reuse existing algorithms for grading block
	ordering problems represented as a DAG. We present potential use cases for
	this feature across various domains, including helping students learn Git
	workflows, shell command sequences, mathematical proofs, and Python
	programming concepts. \end{abstract}

\begin{CCSXML}
	<ccs2012>
	<concept>
	<concept_id>10003456.10003457.10003527</concept_id>
	<concept_desc>Social and professional topics~Computing education</concept_desc>
	<concept_significance>500</concept_significance>
	</concept>
	</ccs2012>
\end{CCSXML}

\ccsdesc[500]{Social and professional topics~Computing education}

\keywords{Parsons Problems, CS1, Distractors, Classical Test Theory}

\maketitle

\section{Introduction} Parsons problems provide a structured and less
intimidating entry point for novice programmers compared to a blank page and an
open-ended programming task. This block-ordering technique of test taking
presents learners with pre-written code blocks that must be arranged in the
correct order, helping them focus on control flow and logic without being
overwhelmed by syntax or boilerplate \cite{ericson2022parsons}. Since their
introduction, Parsons problems by Dale Parsons and Patricia Haden
\cite{parsons2006parson} have inspired many different tools and delivery
methods for the probelms and their variants.

\begin{figure}
	\includegraphics[
		width=0.9\columnwidth,
		alt={Screenshot of a Python programming
				block-ordering interface. The problem asks students to construct a function that computes
				the sum of two numbers with multiple correct solutions. The top panel shows draggable
				Python code blocks including function definition, return statements, and different
				approaches to summing variables. The bottom panel shows a partially constructed solution
				with some blocks already placed.}
	]{./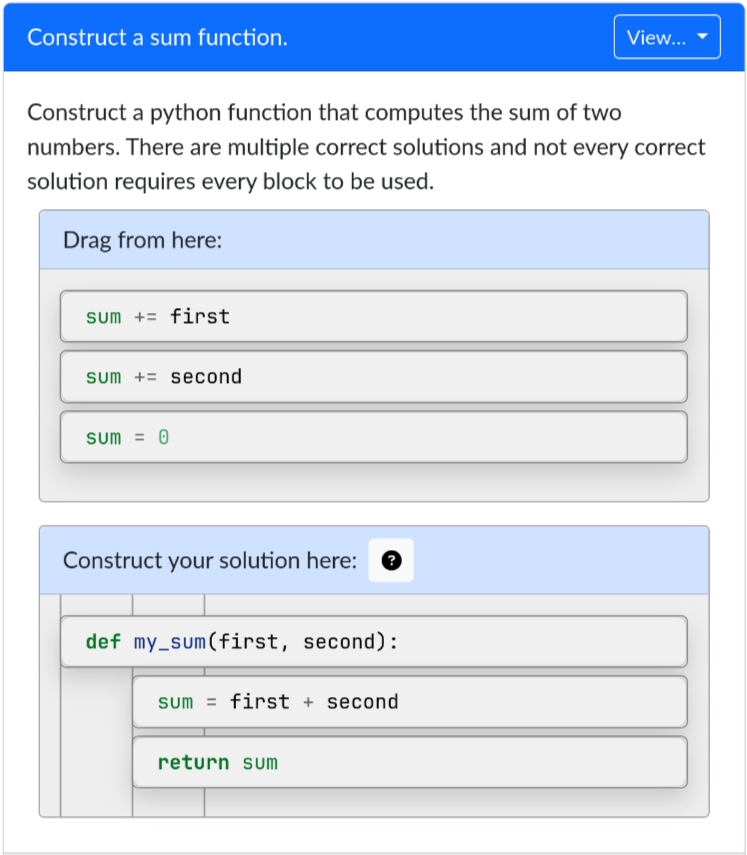}
	\caption{An example of a question utilizing the multigraph collapse algorithm
		in a question that allows multiple ways of returning the sum of two
		values. Between all six blocks presented, there exists four valid solutions, using
		different 		blocks and different orderings of these blocks.}
	\label{fig:sum_pic}

\end{figure}

\citet{poulsen2022proof} expanded the basic functionality of one
correct ordering block-ordering problems to include multiple correct orderings
of the same set of blocks. This was achieved by representing problems as
directed acyclic graphs (DAGs). In this approach, each node represents a block
of the problem, and each edge shows a dependency between two blocks.
Representing the problem as a DAG ensures that the question blocks are reliant
on a previous block and does not allow blocks to depend on themselves.
This design allows instructors to easily create problems with multiple valid
orderings while the underlying systems handles the DAG creation and grading of
student submissions \cite{poulsen2023efficient}. They also extended the idea of
block-ordering problems into mathematical proof construction. Optionally
ordered block-ordering problems have proven valuable across diverse fields,
including networking, chemistry, and programming.

However, a key limitation of the \citet{poulsen2022proof} optionally ordered
block-ordering problems was the inability to include \emph{optional blocks} or
blocks in a question that may be valid in one solution but not in another. As a
result, instructors were constrained to creating problems where every correct
solution had to include a fixed set of blocks that must be included or
excluded, unless they resorted to execution-based feedback~\cite{du2020review}.
However, many instructors prefer line-based feedback because it can give
students more information about how to fix their solutions, does not require
writing test cases, and can be used for proofs in addition to code.

Our goal for this project was to extend the functionality of prior work done by
\citet{poulsen2022proof} and add the possibility for blocks to be substituted
or omitted on top of the already implemented optional ordering feature. To
support this functionality, we introduce a multigraph-based model for
representing dependencies between blocks. This enables us to encode multiple
dependency paths using \emph{colored} edges, which represent alternative
logical progressions. We present an algorithm to collapse a multigraph into all
of its possible valid DAGs (Section~\ref{sec:implementation}), allowing us to
reuse the existing Proof Blocks grading algorithm~\cite{poulsen2023efficient}
to compute the students score. We argue that modeling questions in this way
better reflects the process of real-world problem-solving while further
reinforcing the insight to students that problems can have more than one
equally valid answer.

This paper makes the following contributions:

\begin{itemize}
	\item An interface for instructors to write optional block block-ordering problems;
	\item Use cases of optional block block-ordering problems in domains including introductory
	      programming, discrete math, and teaching command shell; and
	\item Detailing an algorithm that collapses a multigraph into all of its valid
	      \emph{directed acyclic graphs} or DAGs, allowing existing DAG based grading algorithms to be
	      used with problems containing optional lines.
\end{itemize}

\begin{figure}
	\begin{lstlisting}[language=HTML,
    basicstyle=\ttfamily\small,
    keywordstyle=\color{darkgray},
    stringstyle=\color{steelblue},
    commentstyle=\color{darkgray},
    frame=lines,
%    numbers=left,
    numberstyle=\tiny,
    breaklines=true,
    showstringspaces=false,
    morekeywords={correct,tag,depends,final},
    escapeinside={(*}{*)}]
(*\color{algorithmblue}<pl-answer*) tag="A" depends="" indent="0">
    def my_sum(first, second): (*\color{algorithmblue}</pl-answer>*)
(*\color{algorithmblue}<pl-answer*) tag="B" depends="A" indent="1">
    sum = 0 (*\color{algorithmblue}</pl-answer>*)
(*\color{algorithmblue}<pl-answer*) tag="C" depends="B" indent="1">
    sum += first (*\color{algorithmblue}</pl-answer>*)
(*\color{algorithmblue}<pl-answer*) tag="D" depends="B" indent="1">
    sum += second (*\color{algorithmblue}</pl-answer>*)
(*\color{algorithmblue}<pl-answer*) tag="E" depends="A|B" indent="1">
    sum = first + second (*\color{algorithmblue}</pl-answer>*)
(*\color{algorithmblue}<pl-answer*) tag="F" depends="C,D|E" indent="1" final="true">
    return sum (*\color{algorithmblue}</pl-answer>*)
    \end{lstlisting}
	\caption{HTML which generates the problem shown in Figure~\ref{fig:sum_pic}.
		Edges in the dependency graph (which is shown in Figure~\ref{fig:multigraph-example}) are
		given in the \texttt{depends} attribute.
		Multigraph dependencies are specified using the pipe operator
		\texttt{|}. In the given problem, block E can depend on either block A or block B, and block F
		can depend on block C and block D or else just block E.
		More details of the HTML specification are in the
		documentation.\protect\footnotemark}
	\label{fig:sum_html}
\end{figure}

\section{Motivating Example} \footnotetext{Documentation link omitted for
	anonymous review.} As a motivating example, we will look at a Parsons problem
which asks a student to construct a function to sum two numbers. The problem is
shown in Figure~\ref{fig:sum_pic}, and the HTML configuration which generates
the question is shown in Figure~\ref{fig:sum_html}. Students can choose their
own preferred method. The student is able to use the more verbose method of
declaring a variable and then adding to the sum variable over multiple lines,
or they can simply assign \texttt{sum} to the value \texttt{first + second}.
Within these broader choices, there are additional choices the student can make
as well, with a total of $4$ possible solutions, coming from $3$ constructible
dependency DAGs.

While writing such a problem would be possible when giving execution-based
feedback, before the contributions of this paper, writing this problem would
not have been possible when giving line-based feedback. (Definitions of
line-based and execution-based feedback are given by \citet{du2020review}).
Figure~\ref{fig:multigraph-example} shows the process by which we construct the
correct dependency relations to model this particular question.

\begin{figure*}
	\centering
	\includegraphics[width=\textwidth]{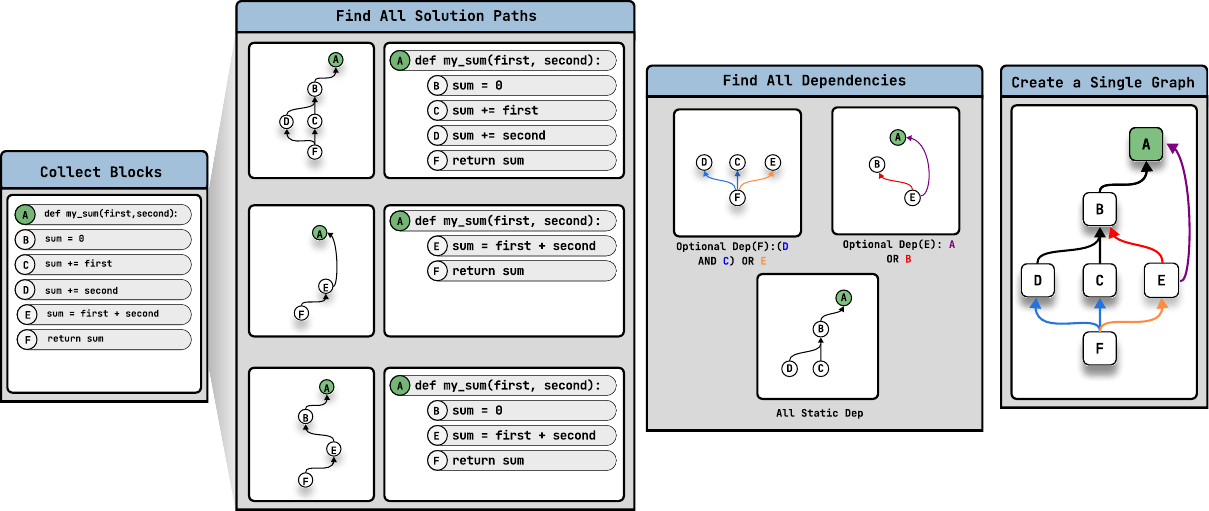}
	\caption{We begin (1) with a collection of nodes that can all be used to
		(2) construct a set of graphs which lead to the source node. We then (3) look at all
		optional dependencies---where a given node can be dependent on different sets of
		nodes---and list those relationships. Finally, we (4) create a single graph with these
		optional dependencies embedded in the edge information. This representation allows the
		multigraph grading algorithm, at a high level, to identify if the source code can be
		reached from a given sink node using the graph we constructed.}
	\label{fig:multigraph-example}
\end{figure*}

\section{Related Work}
\subsection{Proof Blocks and Parsons Problems}\label{sec:parsonsproofblocks}

% A Synopsis 
Parsons problems were first introduced by~\citet{parsons2006parson} as
``Parsons programming puzzles'' which is an engaging, interactive tool for
scaffolding the process of learning to write code. Since their introduction,
they have become one of the most well investigated tools in computing
education~\cite{du2020review, ericson2022parsons} showing a wide variety of
affordance for students. \citet{ericson2017solving} showed in a variety of
studies that practice with Parsons problems is just as effective in learning
programming as traditional code writing exercises, but takes students less time
to complete which leads to more efficient learning. Additionally,
\citet{ericson2018evaluating,ericson2019investigating} introduced and evaluated
\textit{Adaptive} Parsons problems which allow students to use a help button to
remove distractors and combine blocks---the intention being to simplify the
problem to match the students current skill level.

% Proof blocks
Proof blocks~\cite{poulsen2022proof} extend the idea of Parsons problems to
the domain of discrete math, specifically, (as the name suggests) mathematical
proofs. They offer a similar interactive, drag-and-drop interface where
students construct a proof by arranging blocks of a proof into a correct order.
This offers some unique challenges with respect to grading, such as the
inability to use execution based feedback as in Parsons problems and the fact
there may be a wide variety of correct orderings of the blocks. To address
this, \citet{poulsen2022proof} introduced a dependency graph grading approach
which allows question authors to embed the logical dependencies between blocks
of the proof presented to students in the block bank. Grading is then done by
accepting any valid topological sort of the DAG as a correct solution. To
further improve this grading approach and feedback mechanism,
\citet{poulsen2023efficient} introduced a partial credit mechanism which is
based on the edit distance between the students submission and the closest
correct solution.

\subsection{Distractors in Parsons Problems}

Distractors, which are incorrect blocks of code included in the problem
bank, have been a common feature of Parsons problems since their
inception~\cite{parsons2006parson}. Proof Blocks similarly include
distractors to highlight common proof errors. While findings regarding their
effectiveness are
mixed~\cite{harms2016distractors,ericson2023multi,poulsen2025measuring}, a
study by \citet{smith2024distractors} found positive results which demonstrated
through think-aloud interviews that distractors encourage students to engage
deeply with implementation details. However, significant work remains both in
identifying different varieties of distractors both in the domains of
programming and proofs and replicating these findings. The multigraph collapse
algorithm described in this paper introduces a novel perspective on
distractors. A student, by selecting one of the optional blocks, and thereby
venturing down one solution path, renders all other optional blocks associated
with other solution paths as distractors. This reconceptualization of
distractors, though not the primary focus of this work, opens up new avenues
for further research and highlights a potential design affordance of the
multigraph collapse algorithm.

\subsection{Worked Example Theory}\label{sec:workedexample}

Fundamentally, when Parsons problems are used in a formative context, they act
as a worked example. Worked examples offer both the opportunity to expose
students to an expert's solution and to scaffold the process of them
constructing that solution~\cite{barbieri2023meta,shen2009design}. Ideally,
this has both the effect of teaching students to follow the ``best practices''
being modeled by the expert's solution and doing so at a more manageable
cognitive load~\cite{sweller2006worked}. Worked example theory has been one of
the primary theoretical frameworks guiding the design and proposed utility of
both Parsons problems~\cite{ericson2022parsons} and Proof
blocks~\cite{poulsen2022proof}.

However, a core limitation of existing block-ordering implementations is that
it only supports a single correct solution path. Therefore, if instructors want
students to engage with multiple worked examples \textit{for the same} problem,
they must create multiple, distinct problems. By allowing optional blocks, and
therefore multiple distinct correct solutions, in a single block bank, we can
allow students to engage with multiple worked examples and explicitly adapt
existing submissions to form a new solution.

\section{System Implementation} \label{sec:implementation} The algorithms and
techniques detailed in this section were implemented in a fork of the
PrairieLearn block ordering plugin, where we build on top of the DAG grading
feature implemented by \citet{poulsen2023efficient}.

\subsection{Multigraph Collapse Algorithm} The aim of this algorithm is to take
a multigraph that allows multiple colored edges between the same pair of nodes,
then collapse it into all of its possible valid DAGs. In the context of an
order-blocks problem, colored edges represent an \emph{or} relationship between
different blocks and their one common dependency. Instructors can use these
multigraphs to model dependency relations with optional or alternative
requirements more easily than explicitly declaring every possible correct
answer.

\begin{algorithm}[!htb]
	\SetKwInOut{Input}{Input}
	\SetKwInOut{Output}{Output}

	\Input{
		\Desc{$M$}{The questions' multigraph.}\\
		\Desc{$F$}{Final node in the multigraph.}\\
	}
	\Output{
		\Desc{$CD$}{List of fully collapsed DAGs}
	}

	\Fn{\textsc{Collapse}($M$, $F$)}{
		$Collapsed\ Graphs\ (CD) \gets$ empty list\;
		$Partially\ Collapsed\ Graphs\ (PCGs) \gets$ empty queue\;

		\textsc{Enqueue}($PCGs$, M)\;

		\While{$PCG$ is not empty}{
			$G$ $\gets$ \textsc{Dequeue}($PCGs$)\;
			$(v, DAG) \gets$ \textsc{DFSuntil}($G$, $F$)\;

			\If{$v$ is NULL}{
				append $DAG$ to $CD$\;
			}

			\ForEach{Color in $G$}{
				\tcp{collapse operation}
				$PCG$ $\gets$ copy($G$)\;

				Remove all \emph{edges} on $v$ of \emph{PCG} \\
				\Indp
				with edges whose color is not $Color$\;
				\Indm

				\textsc{Enqueue}(PCGs, PCG)\;
			}

		}
		\Return{$CD$}\;
	}
	\caption{Collapse a multigraph into all valid DAGs}
	\label{alg:multigraph-collapse}
\end{algorithm}

For example, if line A can depend on either line B or line C (but not
necessarily both), this can be represented as two differently colored edges
from line A to both line B and line C. Grading becomes straightforward because
we can take the student's submission, iterate through the list of all possible
valid orderings generated by collapsing the multigraph, and return the minimum
edit distance across all possibilities for maximum partial credit, or full
credit if any edit distance is $0$.

The multigraph collapse algorithm takes a multigraph $M$ and the multigraph's
final node $F$ (the node that has only incoming edges, representing the final
block in the solution). Assuming the multigraph is acyclic, the algorithm
utilizes a modified depth-first search to traverse the dependency graph
backward from node $F$ until it either:

\begin{itemize}
	\item Completes the traversal (reaching all initial nodes with only incoming edges); or
	\item Encounters a node with multiple colored edges requiring a choice.
\end{itemize}

When the traversal completes without encountering colored edges, we have
successfully identified a valid DAG and add it to our list of collapsed DAGs
($CD$). When we encounter a node with any colored edges, we must split the
problem. For each color present at that node, we create a copy of the current
graph, remove all edges of other colors from that node, and continue
processing. This branching ensures we explore all possible combinations of
every choice. The full details of the algorithm are shown in
Algorithm~\ref{alg:multigraph-collapse}.

\begin{theorem}
	The algorithm's time complexity is $\bigO{(d \cdot (n + m))}$ where $n$ is
	the number of nodes, $m$ is the number of edges, and $d$ is the number of DAGs generated by
	collapsing all possible edge combinations.
\end{theorem}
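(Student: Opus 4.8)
The plan is to model the entire run of \textsc{Collapse} as a rooted tree $T$ — the \emph{collapse tree} — whose root is the input multigraph $M$, whose nodes are exactly the graphs that are dequeued from $PCGs$, and in which the children of a node $G$ are the graphs that $G$ enqueues while being processed. Reading the \textbf{foreach} loop as iterating over the colors incident to the node $v$ returned by \textsc{DFS-Until} (equivalently, as guarded by $v \ne \texttt{NULL}$) — which is the reading under which the algorithm is correct and terminates — a node of $T$ is a leaf precisely when \textsc{DFS-Until} returns $\texttt{NULL}$ on it, and that is precisely the case in which a DAG is appended to $CD$. Hence $T$ has exactly $d = |CD|$ leaves. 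The two quantities to bound are (i) the number of iterations of the \textbf{while} loop, which is $|V(T)|$, and (ii) the cost of a single iteration; combining the two bounds yields the theorem, with one mild subtlety in the accounting for (ii).

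For (i), the crucial point is that whenever \textsc{DFS-Until} stops at a non-null node $v$, that node carries at least two distinct colors — otherwise there is no choice to branch on — so the \textbf{foreach} loop enqueues at least two children. Thus every internal node of $T$ has $\ge 2$ children. A rooted tree with $\ell$ leaves in which every internal node has at least two children has at most $\ell - 1$ internal nodes and hence at most $2\ell - 1$ nodes in total; with $\ell = d$ this gives $|V(T)| \le 2d - 1$, so the \textbf{while} loop runs $\bigO(d)$ times.

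For (ii), a single iteration on a graph $G$ makes one call to \textsc{DFS-Until}, a (modified) depth-first traversal of $G$ costing $\bigO(n+m)$, and then, at a branch node with $k = k(v)$ incident colors, performs $k$ copies of $G$ (each $\bigO(n+m)$) together with $k$ passes over the edges incident to $v$ (each $\bigO(m)$), i.e.\ $\bigO\big((k+1)(n+m)\big)$ in all; a leaf costs only $\bigO(n+m)$. Summing over $T$, the $d$ leaves contribute $\bigO(d(n+m))$, and the internal nodes contribute $\sum_i \bigO\big((k_i+1)(n+m)\big)$. Since $\sum_i k_i$ counts the non-root nodes of $T$ it is at most $2d - 2$, and there are at most $d - 1$ internal nodes, so $\sum_i (k_i + 1) \le 3d - 3 = \bigO(d)$; the internal nodes therefore also contribute $\bigO(d(n+m))$, giving the claimed total of $\bigO(d(n+m))$.

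The main obstacle is step (i): the natural worry is that repeated branching could blow up the collapse tree far beyond the number of output DAGs, inflating the iteration count past $\bigO(d)$. What rescues the bound is that \textsc{DFS-Until} halts only at genuine choice nodes with two or more colors, so branching is never unary and $T$ stays within a constant factor of its leaf set. The related subtlety in step (ii) is the telescoping estimate $\sum_i (k_i + 1) = \bigO(d)$, which is exactly what lets the per-color copying be absorbed into $\bigO(d(n+m))$ rather than costing an extra factor equal to the number of colors. It is also worth remarking that the standing assumption that $M$ is acyclic is what guarantees each \textsc{DFS-Until} call, and hence the whole procedure, terminates, so that $d$ is finite in the first place.
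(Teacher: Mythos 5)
Your proof is correct and follows the same decomposition as the paper's own argument---bound the number of while-loop iterations by $\bigO(d)$ and multiply by the $\bigO(n+m)$ cost of each traversal---but you supply the two justifications the paper's one-line proof omits: the branching-factor-$\ge 2$ argument showing the collapse tree has at most $d-1$ internal nodes (the paper merely asserts that the number of splits is bounded by $d-1$), and the telescoping bound $\sum_i (k_i+1) = \bigO(d)$ that absorbs the per-color copying and edge-removal costs, which the paper does not account for at all. Your charitable reading of the \textbf{foreach} loop (iterating only over colors incident to $v$, guarded by $v \ne \texttt{NULL}$) matches the prose description of the algorithm and is necessary for both correctness and termination.
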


\begin{proof}
	The algorithm generates an upper bound of $d$ distinct DAGs, where $d$ represents all
	possible combinations of color choices. For each of the generated DAG's we
	perform a traversal \bigO{(n + m)}. The total number of splits is bounded
	by $d-1$ (because we generate $d$ final DAGs). Therefore, the time
	complexity is $\bigO{(d \cdot (n + m))}$
\end{proof}

\begin{table*}[ht]
	\centering
	\renewcommand{\arraystretch}{1.2}
	\setlength{\tabcolsep}{10pt}
	\rowcolors{2}{white}{rowgray}
	\begin{tabular}{p{2cm}|p{8.25cm}|c|c|c}
		\rowcolor{steelblue}
		\textcolor{white}{Content}     &
		\textcolor{white}{Description} & \textcolor{white}{$n$
		(Blocks)}                      & \textcolor{white}{$m$ (Edges)}                       & \textcolor{white}{$d$ (DAGs)}          \\
		\hline
		Bash                           & Change the origin of a git repo.                     & 5                             & 5  & 2 \\
		Bash                           & Remove the final git tag                             & 12                            & 7  & 2 \\
		Bash                           & Commit and view the git log.                         & 5                             & 4  & 2 \\
		Bash                           & Commit and view the git log (alternate).             & 9                             & 4  & 2 \\
		\hline
		Python                         & Write a function to sum two numbers.                 &
		6                              & 8                                                    & 3                                      \\
		Python                         & Create \texttt{hello.txt} containing the phrase
		\texttt{hello}.                & 13                                                   & 10                            & 4      \\
		Python                         & Reverse words in a string.                           & 8                             & 12 & 8 \\
		Python                         & Read a file and output a formatted result.           & 8                             & 13 & 8 \\
		Python                         & Find all keys in a dictionary with a specific value. & 8                             & 9  & 4 \\
		Python                         & Filter names starting with
		A.                             & 8                                                    &
		9                              & 2                                                                                             \\
		\hline
		Proof                          & Prove that if $n$ is even, then $n+10$ is even       & 11                            & 13
		                               & 2                                                                                             \\
		Proof                          & Prove that the cardinality of two sets is equal      & 10                            &
		10                             & 2                                                                                             \\
	\end{tabular}

	\vspace{1mm}
	\caption{Analysis of the complexity of Bash and Python questions. $n$ = number of code blocks, $m$ = number of edges, $d$ = number of valid DAGs.}
	\label{tab:question-summary}

\end{table*}

We note that the number of DAGs $d = \prod_i c_i$ for all nodes $i$, where
$c_i$ is the number of distinct colors on node $i$'s outgoing edges, the
multigraph collapse algorithm upper bound of DAGs needing to be traversed is
exponential. In theory, this could prohibit its use in some questions, however,
in practice questions with many possible orderings would be impractical for
student learning. As shown in Table~\ref{tab:question-summary}, all questions
which have been written so far across multiple domains have $d \le 8$.

%Writing an effective question gets more difficult as the number of orderings
%increases. Every new split has to be kept track of and each new ordering has to
%ensure correctness. Attempting to write an effective question with only up to
%twenty orderings would be a very difficult task and one that prevents the
%algorithm from reaching a runtime that could reach unsuable. In a case where
%there are twenty plus orderings the question writer would more than likely
%generate invalid orderings that were unintended, rendering the question invalid.

\subsection{Depth First Search Until Algorithm} We separate the portion of our
algorithm which finds the next node where a collapse needs to happen into a
separate algorithm which we call DFS-Until. This algorithm is a basic
\emph{Depth-First Search} with a halting condition to stop searching and return
the reason with the nodes visited. This allows for situations where a graph may
need some modifying to shape it into a certain form, correct errors, or in this
case alert the multigraph collapse algorithm there are colored nodes. After
returning the position of the colored nodes we continue searching from the
halting position. The algorithm takes the halting condition as input along with
the graph and a starting node. The complexity of this algorithm is the same as
a standard \emph{DFS}, where a complexity is \bigO{($n+m$)} where $n$ is the
number of nodes and $m$ is the number of edges.

\begin{algorithm}[t]
	\Input{\\
		\noindent\hspace{1.5em}\Desc{$H$}{The halting condition for Depth-First Search.}\\
		\noindent\hspace{1.5em}\Desc{$G$}{The graph to be traversed.}\\
		\noindent\hspace{1.5em}\Desc{$N$}{Starting node for the Depth-First Search.}
	}
	\Output{\\
		\noindent\hspace{1.5em}\Desc{$v$}{Reason for halting.}\\
		\noindent\hspace{1.5em}\Desc{$DAG$}{Graph of traversed nodes and edges}
	}
	\Fn{\DFSuntil($H$, $G$, $N$)}{
		$S \leftarrow$ empty stack\;
		$DAG \leftarrow$ empty graph\;
		Push($N$, $S$) \tcp*{Put starting node on the stack}
		\While{$S$ is not empty}{
			$v \leftarrow$ Pop($S$)\;
			Add $v$ and all it's incoming edges to $DAG$\;
			\If{$v$ meets $H$}{
				\Return($v$, $DAG$) \tcp*{Halting condition met}
			}
			\ForEach{Incoming edge $(u,v)$ of $v$}{
				Push $u$ to $S$\;
			}
		}
		\Return(NULL, $DAG$)\;
	}
	\caption{Depth‑first search with halting condition}
	\label{alg:dfs-until}
\end{algorithm}

%\subsection{Named Paths} \label{sec:named-paths} The named paths feature was
%added after a question was written that needed further filtering to ensure that
%a valid DAG but invalid solution was not included in the final list of
%collapsed DAGs. The feature requires additional information to be specified by
%the writer, where if a path is named its edges must be completely included. The
%implementation details are beyond the scope of this paper and have been omitted
%from Algorithm~\ref{alg:multigraph-collapse}.

\section{Example Usage in Different Domains}

%\vspace{-3em}
\subsection{Discrete Math}

Like in programming, in writing proofs there are also often multiple methods to
successfully complete a proof. In the example Proof problems we have
demonstrated a few of these ways. For example, consider the following problem
that may appear in the first few weeks of a discrete math course. While
teaching students basic proof techniques, for example: ``Prove that if $n$ is
even, then $n + 10$ is even''. It can be useful to help students see that this
fact can be proven either directly, by assuming $n$ is even and then showing
$n+10$ is even, or by contradiction, by assuming that $n+10$ is odd and then
showing a contradiction.

Later in the course, discrete math courses usually cover set cardinality, and
have students write proofs that two sets have the same cardinality. When
proving that two sets $A$ and $B$ have equal cardinality, one can construct and
injection $f: A \rightarrow B$ and an injection $g: B \rightarrow A$ and use
the Schr\"oder–Bernstein theorem to conclude that the sets have equal
cardinality. However, a corollary to the Schr\"oder–Bernstein theorem is that
one or both of the injections could be replaced by a surjection in the opposite
direction. So the proof could equivalently proceed by defining injection $f: A
	\rightarrow B$ and a surjection $g: A \rightarrow B$. This is another
opportunity where the multigraph collapse algorithm allows us to give students
problems that permit them to explore more varied solutions to a problem, as the
students can be presented with the blocks to complete the proof either way.

\begin{tcolorbox}[
		colback=gray!10, % very light gray background
		colframe=gray!10, % no visible border
		boxrule=0pt, % no border thickness
		arc=2pt, % slightly rounded corners
		left=4pt, % small padding on the left
		right=4pt, % small padding on the right
		top=2pt, % minimal top padding
		bottom=2pt, % minimal bottom padding
		fontupper=\small, % smaller text to de-emphasize
		before skip=6pt, % small space before box
		after skip=6pt % small space after box
	]
	\textbf{Key Use Cases:} In discrete math, it is useful to show students that proofs can be
	constructed multiple ways, such as directly or by contradiction.
\end{tcolorbox}

\subsection{Introductory Programming}

Block-ordering problems in the context of an introductory programming
course, often referred to as Parsons problems~\cite{parsons2006parson}, have had
a variety of grading approaches, including: 1) execution-based grading, where
the student submission is executed and its output is compared against unit
tests, and 2) order-based grading, where, much like in the context of discrete
math, the ordering of the blocks students submit is compared to one or more
possible orderings~\cite{helminen2013students, wu2023using, wu2024evaluating}.
Each of these comes with its own unique feedback mechanisms, with
execution-based grading often providing feedback through test case results and
order-based grading often using a top-down approach to feedback, where the
student is shown which blocks are correctly placed until a block that is
incorrectly placed is reached. The use of block-ordering problems in the domain
of programming is unique in that the use of optional blocks, and therefore,
support for multiple correct solutions, is enabled natively through
execution-based grading. However, when using a simple ordering comparison where
only one potential ordering is possible, or a DAG-based grading
approach~\cite{poulsen2023efficient} where multiple correct orderings are
possible, the multigraph collapse algorithm allows instructors to create
problems that allow for multiple, distinct correct solutions while retaining
the simplicity and specificity of ordering-based feedback.

\subsubsection{Instructor Experiences:} This optional block feature was used by
the second author of this paper in a large introductory programming course,
both on homework and exams. The support for optional blocks primarily enabled
the creation of questions that allowed students to submit solutions that
performed operations either through iteration or through function chaining. For
example, the instructor constructed one problem where students were asked to
read names from a file where each name appeared on a new line, and return them
as a comma separated string. The use of optional blocks allowed for two valid
approaches. One, where the student iterated through the file using
\inlinecode{readline()} and collected each name by appending it to a list.
Two, it allowed for another solution where the entire file was read using
\inlinecode{read()} and the list of names was constructed with
\inlinecode{names = file\_content.strip().split("\textbackslash{}n")}.

\begin{tcolorbox}[
		colback=gray!10, % very light gray background
		colframe=gray!10, % no visible border
		boxrule=0pt, % no border thickness
		arc=2pt, % slightly rounded corners
		left=4pt, % small padding on the left
		right=4pt, % small padding on the right
		top=2pt, % minimal top padding
		bottom=2pt, % minimal bottom padding
		fontupper=\small, % smaller text to de-emphasize
		before skip=6pt, % small space before box
		after skip=6pt % small space after box
	]
	\textbf{Key Use Cases:} In the context of introductory programming courses,
	the multigraph grading approach supports having students solve the same
	problem with different paradigms side-by-side (e.g., iterative vs.\
	functional, iterative vs.\ recursive).
\end{tcolorbox}

\subsection{Shell Commands and The Shell Tutor ITS}

% Admittedly a bit long winded and repeats a lot of what was noted before. I'm
% sure we can trim this down significantly.

One of the key motivators of this work was to enable the rich support of
Parsons problems for sequences of shell commands particularly, to assess
student knowledge of using shell commands to perform complex tasks learned
using the Shell Tutor, which is an intelligent tutoring system which teaches
the Unix command shell \cite{winder2024shelltutor}. Singular shell commands, or
short sequences of commands, can accomplish the same task. To delete a
directory in the Unix command shell, one can use either the command \inlinecode{rm -r
	directory} or the sequence of commands \inlinecode{rm directory/*} and \inlinecode{rmdir
	directory} to delete the directory and files. The Shell Tutor would accept
either of these sequences as a valid solution for this task, but would not
accept both. There is no reason to attempt to delete the directory twice. When
assessing student knowledge of the command shell using Parsons problems,
it allows a student to construct both possible solutions--just as they are able
to in the intelligent tutor--assists in emulating the environment in which they
learned, as well as the real world usage of the command shell.

% This subsubsection could potentially be removed entirely
\subsubsection{Instructor Experiences:} The third author has utilized these
optional dependency Parsons problems to assess student command shell and Git
knowledge in a Fundamentals of Software Engineering course. One example problem
assesses student understanding of Git remotes by manipulating a local
repository's remote: changing the \inlinecode{origin} remote to a new URL and
retaining the original remote URL under a new name. One solution sequence is
``rename the existing remote, add a new remote,'' while the second solution is
``delete the old remote, add the new remote, add the old remote again with a
different name.'' These two sequences of operations achieve the same result,
however proper construction of these sequences requires a higher level of
understanding of Git than rote command repetition. In practice, the third
author has noticed anecdotal evidence that this question is a great
discriminator when assessing student understanding of Git.

% I'm adding the note that I have seen anecdotal evidence for this being a
% discriminator because I haven't run the most recent numbers. I do not want to
% make a concrete claim here that I don't have proper evidence to backup at
% this time :) 

\begin{tcolorbox}[
		colback=gray!10, % very light gray background
		colframe=gray!10, % no visible border
		boxrule=0pt, % no border thickness
		arc=2pt, % slightly rounded corners
		left=4pt, % small padding on the left
		right=4pt, % small padding on the right
		top=2pt, % minimal top padding
		bottom=2pt, % minimal bottom padding
		fontupper=\small, % smaller text to de-emphasize
		before skip=6pt, % small space before box
		after skip=6pt % small space after box
	]
	\textbf{Key Use Cases:}  Problems can be constructed with the intention of
	highlighting both approaches to achieving the same outcome while
	ensuring the operations used to do so does not interfere with each other and
	ensures redundant operations are avoided.
\end{tcolorbox}

\section{Formative and Summative Settings}

To highlight the utility of this system, we present and discuss several examples
of potential use cases and propose future work evaluating block-ordering
problems with optional blocks in both formative and summative settings.

% Key Idea: Desirable difficulties and scaffolding
\subsection{Multiple Solutions in Practice Settings}

% Prior Work
A key benefit of block-ordering problems is their ability to expose students to
expert solutions as worked examples~\cite{ericson2022parsons}. In prior work,
a commonly cited affordance of block-ordering problems is that students are
more efficient at solving these questions in comparison to an equivalent code writing
question while still maintaining similar learning gains under both practice
conditions~\cite{ericson2017solving, ericson2023multi}. However, as noted by
\citet{haynes2022impact, haynes2021problem}, many students struggle to solve
block-ordering problems that contain what they term ``uncommon solutions'',
which are solutions that novice students are unlikely to produce. This raises
the question of how learning efficiency should be balanced against potentially
desirable difficulties~\cite{bjork2020desirable, bjork2011making}, such as
exposing students to a variety of solutions to a problem that they may not have
encountered before.

% Optional line block ordering problems to the rescue!
The multigraph collapse algorithm may help address this challenge by enabling
instructors to create block-ordering problems with multiple correct solution
paths and, if desired, require students to \textit{discover all correct
	solutions} to the problem. In this scenario, it is likely that a student
solving such a problem would first discover the solution path that most closely
matches their preconception of the correct solution. From there, they can
explore what changes to their initial solution would yield a different, but
still correct, solution to the problem. In such a scenario, their initial
solution would act as a scaffolding which would allow them to explore the
problem space without being potentially overwhelmed by having to construct an
uncommon or potentially counterintuitive solution from scratch.

% Proposed future work
\subsubsection{Future Work} Future work should consider the development of
interfaces for supporting optional blocks in block-ordering problems with
a focus on mitigating the extraneous cognitive load that may be caused by
providing the sometimes large number of blocks needed to support a variety of
solution paths. Additionally, future work should consider the learning benefits
of allowing students to explore multiple correct solutions in practice settings
and the degree to which this exploration supports a more robust understanding
of both the problem they are solving and the range of approaches available to
solve said problem.

\subsection{Allowing Many Correct Solutions in Exams}

% Key Idea: 
In the context of exams, the work of ~\citet{haynes2022impact, haynes2021problem}
presents a critical consideration for the use of Parsons problems with only one
available solution in exams: do students whose preconception of the correct
solution match the allowed solution path gain an unfair advantage over students
who have a different preconception of the correct solution? Certainly, prior work
would suggest that students whose preconceived solution matches the available
solution path would be able to solve the problem faster, a key consideration in
the time constrained environment of exams.

\subsubsection{Future Work:} Future work should consider both the
differential advantage students gain from having a preconceived solution that
matches the available solution path in an exam setting and the degree to which
allowing for multiple correct solutions mitigates this advantage.

\section{Conclusion} We have presented a novel approach to support optional
blocks in Proof Blocks DAG grading software \cite{poulsen2022proof} by using a
multigraph-based dependency model. This tool not only allows instructors more
freedom to write more complex and interesting questions, but also helps
illustrate the principle that multiple valid solutions can exist for a given
problem.

\bibliographystyle{ACM-Reference-Format}
\bibliography{acmart}
\balance

\end{document}